\documentclass[onecolumn,showpacs,preprintnumbers,amsmath,amssymb]{revtex4-1}

\usepackage{lipsum}
\usepackage{tabularx}
\usepackage{array}
\usepackage{mathrsfs}
\usepackage{amssymb}
\usepackage{amsmath}
\usepackage{cases}  
\usepackage{graphicx}
\usepackage{subfigure} 
\usepackage{dcolumn}
\usepackage{bm}
\usepackage{verbatim} 
\usepackage[dvipdfm,pdfstartview=FitH,colorlinks,linkcolor=blue,anchorcolor=blue,citecolor=blue]{hyperref}
\usepackage{amsthm}  
\newtheorem{theorem}{Theorem}

\begin{document}

\title{Scalable orbital-angular-momentum sorting without destroying photon states}

\author{Fang-Xiang Wang}
\author{Wei Chen}
\email{weich@ustc.edu.cn}
\author{Zhen-Qiang Yin}
\author{Shuang Wang}
\author{Guang-Can Guo}
\author{Zheng-Fu Han}
\affiliation{Key Laboratory of Quantum Information, University of Science and Technology of China, Hefei 230026, China\\
and Synergetic Innovation Center of Quantum Information $\&$ Quantum Physics, University of Science and Technology of China, Hefei, Anhui 230026, China}


\begin{abstract}

Single photons with orbital angular momentums (OAM) have attracted substantial attention from researchers. A single photon can carry infinite OAM values theoretically. Thus, OAM photon states have been widely used in quantum information and fundamental quantum mechanics. Although there have been many methods for sorting quantum states with different OAM values, the non-destructive and efficient sorter of high-dimensional OAM remains a fundamental challenge. Here, we propose a scalable OAM sorter which can categorize different OAM states simultaneously, meanwhile, preserve both OAM and spin angular momentum (SAM). Fundamental elements of the sorter are composed of symmetric multiport beam splitters (BSs) and Dove prisms with cascading structure, which in principle can be flexibly and effectively combined to sort arbitrarily high dimensional OAM photons. The scalable structures proposed here greatly reduce the number of BSs required for sorting high dimensional OAM states. In view of the non-destructive and extensible features, the sorters can be used as fundamental devices not only for high-dimensional quantum information processing, but also for traditional optics.

\end{abstract}

\pacs{42.50.Tx, 42.50.Dv, 07.60.Ly}

\keywords{}

\maketitle

\section{introduction}
Single photons with orbital angular momentum (OAM) have been prospective high dimensional resources in quantum physics since 1992 \cite{Allen1992}. The Hilbert space dimension of a single photon with OAM can, in principle, be arbitrarily large. This high-dimensional property gives OAM an important role in fundamental studies of quantum mechanics \cite{Leach2010}, high precision optical metrology \cite{D’Ambrosio2013, Lavery2013, Krenn2016}, micromechanics \cite{Galajda2001, Molina-Terriza2007}, quantum cloning \cite{Nagali2009, Nagali2010}, quantum memory \cite{Ding2015, Zhou2015}, quantum computing \cite{Deng2007} and high-dimensional quantum communication \cite{Bourennane2001, Barreiro2008, Dudley2013, Simon2014, Guan2014, Mirhosseini2015}.

In many fields, sorting different OAM states is a fundamental requirement, such as the measurement of high-dimensional Bell states \cite{Dada2011} and quantum communication. In the early days, a fork hologram was a primary tool for sorter single photons with different OAM values \cite{Mair2001, Gibson2004}. The fork hologram adds a spire phase structure to the incident OAM light. If the added azimuthal phase structure is $-l$, then the incident OAM light with quantum number $l$ will be transformed into a zero-order Gaussian light, whereas light with $l'\neq l$ will be transformed into non-zero OAM light. Thus, a single mode fiber after the hologram will collect only the OAM light with quantum number $l$. However, this method can only measure light with one particular OAM state each time \cite{Mair2001}. The improved sorter with a fork hologram can sort several OAM states but with very low efficiency \cite{Gibson2004}. Recently, Berkhout \emph{et al.} introduced conformal transformation into OAM sorter studies \cite{Berkhout2010}. They employed a Cartesian to log-polar transformation to sort different OAM states simultaneously. By combining the Cartesian to log-polar transformation with refractive beam copying, the number of OAM states being sorted simultaneously can reach 25 to 27, and the sorter efficiency is greater than $92\%$ \cite{Mirhosseini2013, Mirhosseini2014}. Although very successful, sorters with conformal transformation destroy the incident photon states. The OAM mode can be retrieved from the Gaussian mode by introducing a desired azimuthal phase, but the external phase added cannot be retrieved as it is usually not prior known in many applications, e.g., in quantum cryptography. This disadvantage limits further applications of OAM states in quantum processing. 

Leach \emph{et al.} demonstrated a non-destructive sorter method utilizing a Mach-Zehnder interferometer (MZI) with two Dove prisms \cite{Leach2002}. This method can in principle sort different OAM states with $100\%$ efficiency. The method has been further developed to demonstrate non-destructive sorters for spin-OAM states \cite{Leach2004, Zhang2014}. Because the MZI is two dimensional, this type of sorter can only sort OAM states with odd $l$ from states with even $l$. Thus, a cascading stage structure with $N-1$ MZIs is needed to sort $N$ OAM states. The cascading stage structure also calls for different changes of azimuthal phases, corresponding to the quantum numbers $l$ of OAM states, and rotating angles in different stages. As OAM states with different values of $l$ evolve with different Gouy phases \cite{Feng2001}, the changing of the azimuthal phase in the different stages decreases the visibility of the MZIs. These disadvantages limit its applications.

In this work, we propose a high-dimensional sorter that sorts different OAM states simultaneously. This sorter is constructed by symmetric $2N$-port beamsplitters (BSs) and Dove prisms. The symmetric $2N$-port BSs are used to realize $N$-dimensional unitary transformations (UTs). The Dove prisms modulate the phases of different OAM states according to the quantum numbers $l$. Since there is no state collapse or active control process, the sorter does not destroy the input OAM states and will not limit the operation rate of the system. We propose three types of cascading-structure OAM sorters, which greatly reduce the resources required and solve the problem of extending the OAM sorter to higher dimensions.

\section{OAM sorter with multiport beamsplitters}
\label{OAMSorter}


\begin{figure}
\centering
\resizebox{8.5cm}{3cm}{\includegraphics{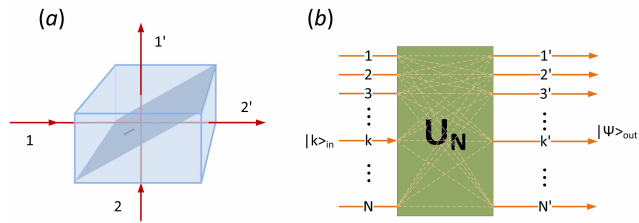}}
\caption{(a) Schematic diagrams of (a) the 4-port BS and (b) the symmetric  $2N$-port BS. The 4-port BS is the element used to construct a $2N$-port BS. A non-OAM photon incident from Port-$k$ of the $2N$-port BS will output from all ports with equal amplitudes.}
\label{fig:BS}
\end{figure}
Before discussing the OAM sorter, we first introduce the $2N$-port BS. Fig. \ref{fig:BS}(a) gives the structure of the elementary 2-dimensional BS (the 4-port BS with two input and two output ports). It can be described by a $2\times 2$ UT:
\begin{equation}
U_2=
\begin{pmatrix}
\sqrt{1-\eta} & -\sqrt{\eta} \\
\sqrt{\eta} & \sqrt{1-\eta}
\end{pmatrix},
\label{eq:U2}
\end{equation}
where $\eta$ and $1-\eta$ are the reflectivity and transmissivity of the interface, respectively. The negative of the last matrix element is introduced by the reflection off the higher index medium. It becomes the most common symmetric 50:50 BS when $\eta=1/2$. Generally, BSs with more input and output ports are called multiport BSs \cite{Walker1986, Walker1987}. We consider only symmetric multiport BSs here. A photon incident from input port $k$ of a symmetric $2N$-port BS will output from all output ports with equal amplitudes (Fig. \ref{fig:BS}(b)). A symmetric $2N$-port BS is described by an $N\times N$ UT. For $N\le 3$, the UT for any symmetric multiport BS is unique based on the conservation of energy. This class of UT is called the canonical multiport \cite{Zukowski1994, Mattle1995}. For $N\ge 4$, there are an infinite number of equivalent classes of UTs for a $2N$-port BS \cite{Bernstein1974, Wootters1986, Torma1996}. Our work is confined to the canonical multiport BS. The $N\times N$ UT of a canonical $2N$-port BS is
\begin{equation}
U_N=\frac{1}{\sqrt{N}}
\begin{pmatrix}
1 & 1 & 1 & \cdots & 1\\
1 & e^{i\frac{2\pi}{N}} & e^{i\frac{4\pi}{N}} & \cdots & e^{i\frac{(N-1)2\pi}{N}}\\
1 & e^{i\frac{4\pi}{N}} & e^{i\frac{8\pi}{N}} & \cdots & e^{i\frac{2(N-1)2\pi}{N}}\\
\vdots & \vdots & \vdots & \ddots & \vdots\\
1 & e^{i\frac{(N-1)2\pi}{N}} & e^{i\frac{(N-1)4\pi}{N}} & \cdots & e^{i\frac{(N-1)^{2}2\pi}{N}}
\end{pmatrix}.
\label{eq:UN}
\end{equation}
For a single photon incident from Port-$k$, the output state will become
\begin{equation}
\begin{aligned}
|\psi\rangle_{out}=U_N |k\rangle_{in}=\frac{1}{\sqrt{N}}\sum_{m=1}^{N}e^{i(k-1)(m-1)2\pi/N}|m\rangle_{out},
\end{aligned}
\label{eq:psi1}
\end{equation}
where $|k\rangle_{in}(|m\rangle_{out})=(0,0,\cdots 0,1_{k(m)},0,\cdots,0)^T$ represents the photon incident (output) from Port-$k(m)$. If $|\psi\rangle_{out}$ goes to a second cascading $2N$-port BS, the final output state will become
\begin{equation}
\begin{aligned}
|\psi\rangle_{out,2}=U_N|\psi\rangle_{out,1}=\begin{cases}
|k\rangle_{out,2} & \text{if $k=1$,}\\
|N-k+2\rangle_{out,2} & \text{if $k\neq 1$.}
\end{cases}
\end{aligned}
\label{eq:psi2}
\end{equation}
The subscript $1(2)$ represents the first (second) $U_N$. It is clear that $|\psi\rangle_{out,2}$ is output from only a single port, as the second $2N$-port BS can be seen as the inverse of the optical path of the first $2N$-port BS. Thus, a single photon incident from a different port of the first $2N$-port BS will output from a different port of the second. This phenomenon is the inspiration of our OAM sorter.

\begin{figure*}
\centering
\resizebox{17cm}{4cm}{\includegraphics{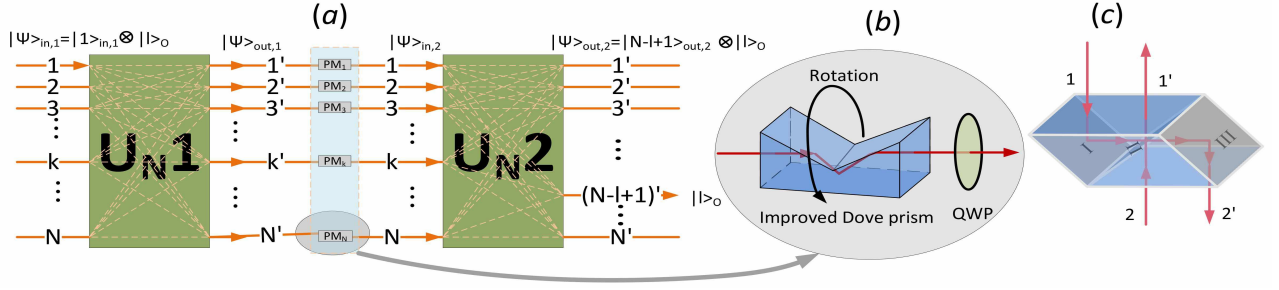}}
\caption{(a) Schematic diagram of the non-destructive OAM sorter. Two cascading $2N$-port BSs with $N$ phase modulators (PMs) are sufficient. (b) The phase modulator (PM) is a SAM preserving module, which consists of an improved Dove prism and a Half-wave plate (HWP). (c) The elemental unitary OAM BS.}
\label{fig:OAMSORTER}
\end{figure*}
Now, we detail the OAM sorter. Eq. \ref{eq:psi1} shows that the first $2N$-port BS acts not only as an equal probability amplitude distributor but also as a phase modulator (PM). The phase introduced depends on the initial incident port. If we add a different phase on a different path, $\phi_m=l(m-1)2\pi/N$ in Path-$m$, between the first and second $2N$-port BS (as shown in Fig. \ref{fig:OAMSORTER}(a)), then $|\psi\rangle_{in,2}=\frac{1}{\sqrt{N}}\sum_{m=1}^{N}e^{i(k+l-1)(m-1)}|m\rangle_{in,2}$, and 
\begin{equation}
\begin{aligned}
|\psi\rangle_{out,2}=\begin{cases} 
|N+2-k-l\rangle_{out,2} & \text{if $k+l<N+2$,}\\
|2N+2-k-l\rangle_{out,2} & \text{if $k+l\geq N+2$,}
\end{cases}
\end{aligned}
\label{eq:psi2OAM}
\end{equation} 
where the integer parameter $l\in [1,N]$. The output port is controlled by parameter $l$. What if the incident photon has OAM and the PMs are Dove prisms (Fig. \ref{fig:OAMSORTER}(b))? When the Dove prism rotates by an angle $\alpha/2$ along the propagation axis, the phase shift of an $l$-order OAM photon becomes $\phi_l=l\alpha$ \cite{Courtial1998a, Courtial1998b}. If the rotation angles of Dove prisms of different paths are in the form $\alpha_m=(m-1)\pi/N$, for an incident $l$-order OAM photon from Port-$k$ of the first $U_N$, the final output port of the second $U_N$ is determined by Eq. \ref{eq:psi2OAM}, provided that the transformation of the $2N$-port BS remains $U_N$. Then, two $2N$-port BSs and $N$ Dove prisms comprise the non-destructive OAM sorter. Additionally, the sorter requires no active modulation of the Dove prism. Thus, there is no limit to the sorter speed. In order to preserve spin angular momentum (SAM), we adopt the improved Dove prism module in Ref. \cite{Leach2004}, which consists of an improved Dove prism and a half-wave plate (HWP), instead of the primary one. For simplicity, the initial OAM states are considered by convention to input from Port-$1$ of the first multiport BS of the OAM sorter if no statement is claimed.


Unfortunately, the classical multiport BSs are constructed from two-dimensional BSs, mirrors and PMs \cite{Zukowski1994, Torma1996, Reck1994}. The BS is not OAM preserving because the reflection on the BS or mirror will invert the $l$-order OAM state $|l\rangle_{O}$ to $|-l\rangle_{O}$, where the subscript $O$ represents the OAM state. There is no interference for two identical OAM states incident from different ports of the BS \cite{Nagali2009}. Thus the multiport BSs can no longer be described by Eq. \ref{eq:UN}. Here we design a simple device that is OAM preserving and can be described by the UT of Eq. \ref{eq:U2}. Fig. \ref{fig:OAMSORTER}(c) gives the schematic diagram of the device, which is a parallelepiped BS. A photon incident from Port-$1$ is totally reflected by the left inner surface I before going to the $\eta$:$(1-\eta)$ BS (interface II), where $\eta$ and $(1-\eta)$ are the reflectivity and transmittivity of interface II, respectively. The transmission part is reflected by the right inner surface III and outputs from Port-$2'$, whereas the reflection part outputs from Port-$1'$ directly. That is, $|l\rangle_{O}\xrightarrow{I}|-l\rangle_{O}\xrightarrow{II}\sqrt{\eta}|l\rangle_{O}+\sqrt{1-\eta}|-l\rangle_{O}\xrightarrow{III}\sqrt{\eta}|l\rangle_{O}+\sqrt{1-\eta}|l\rangle_{O}$. Similarly, if the OAM state $|l\rangle_O$ is incident from Port-$2$, the output state becomes $|l\rangle_{O}\xrightarrow{II}\sqrt{1-\eta}|l\rangle_{O}-\sqrt{\eta}|-l\rangle_{O}\xrightarrow{III}\sqrt{1-\eta}|l\rangle_{O}-\sqrt{\eta}|l\rangle_{O}$. Thus, the parallelepiped device acts as a 2-dimensional UT to the incident OAM state. This device is an elementary OAM BS. Any $2N$-port-OAM BS and UT can thus be realized by the elemental OAM BSs, mirrors and PMs.

\section{scalable structures of OAM sorter}

\begin{figure}
\centering
\resizebox{7.5cm}{11cm}{\includegraphics{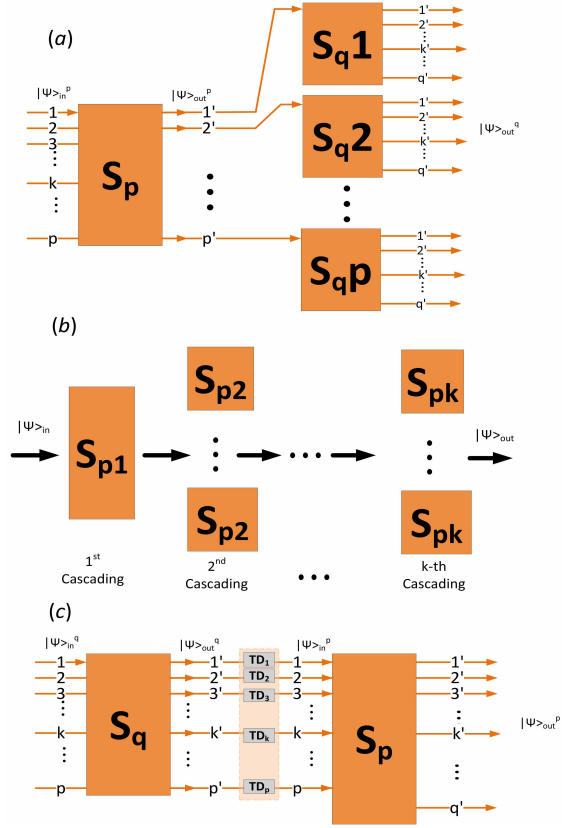}}
\caption{Cascading structure of the OAM sorter for $(p\cdot q)$-dimensional OAM states. (a) Parallel cascading structure with one $S_p$ and $p$ $S_q$. (b) Parallel multi-cascading structure (PMCS). The number of $S_{p_i}$ of the $i$-th cascading is $p_1\cdot p_2\cdots p_{i-1}$, where $i>1$ and $p_1>p_2>\cdots p_k$. (c) Time-delay cascading structure (TDCS) with only one $S_p$ and one $S_q$, where $p>q$. $TD_k$: time delay of Port-$k$.}
\label{fig:CASCADED}
\end{figure}
The number of BSs sufficient for a classical $N$-dimensional UT is ${N\choose 2}$ \cite{Reck1994}, where $N\choose 2$ is the binomial coefficient. However, the number of BSs required increases as $\mathcal{O}(N^2-N)$, causing the $N$-dimensional UT to lack extensibility. This disadvantage limits the applications of multiport BSs for high-dimensional UTs and high-dimensional OAM sorters. This problem can be solved by decomposing the $N$-dimensional OAM sorter into cascading structures, as shown in Fig. \ref{fig:CASCADED}. 

\subsection{Parallel cascading structure}
\label{subPCS}

Fig. \ref{fig:CASCADED}(a) shows the parallel cascading structure (PCS) by which a $p$-dimensional OAM sorter $S_p$ is cascaded by $p$ times a $q$-dimensional sorter $S_q$. As discussed above, the output state from Port-$i'$ of $S_p$ is $|l=pn-i\rangle_O$, and the corresponding output state for $S_q$ is $|l'=qn'-i\rangle_O$, where $n,n'=1,2,3,\cdots$. If Port-$i'$ of $S_p$ is cascaded by Port-1 of $S_q$, the output ports of $S_qi$ for different OAM states are degenerate only if the differences in the quantum values of the OAM states satisfy $l'-l=t[p,q]$, where $t$ is an arbitrary positive integer and $[p,q]$ is the lowest common multiple (LCM) of $p$ and $q$. Thus, if $p$ and $q$ are co-prime, the cascading structure of one $S_p$ and $p$ times $S_q$ can completely sort $(p\cdot q)$-dimensional OAM states. It is easy to prove that $d_{N}^{PCS}(p,q)<d_N$ always holds, where $d_{N}^{PCS}(p,q)$ and $d_N$ are the numbers of BSs necessary for the PCS and the direct $N$-dimensional OAM sorter, respectively. Here we give the optimized condition for the PCS OAM sorter:
\begin{theorem}
For an arbitrary non-prime $N$, if there exists a set of co-prime numbers $\{(p_i,q_i)|p_i>q_i,p_i,q_i\in (1,N)\}$ satisfying $N=p_i\cdot q_i$, then $d_N^{PCS}(p_i,q_i)\ge min\{d_N^{PCS}(p_j,q_j),d_N^{PCS}(p_{j+1},q_{j+1})\}$, where $q_1<q_2<\cdots<q_j<r_0<q_{j+1}<\cdots$, and $r_0$ satisfies $r_0=\sqrt[3]{N+\sqrt{N^2+\frac{1}{27}}}+\sqrt[3]{N-\sqrt{N^2+\frac{1}{27}}}$.
\label{the:PCS}
\end{theorem}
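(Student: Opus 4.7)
The plan is to treat the count $d_N^{PCS}(p,q)$ as a function of one real variable via the constraint $pq=N$, locate its unique interior minimizer $r_0$ by a first-order condition, and argue by unimodality that among the finitely many admissible co-prime factorizations the optimum must lie at one of the two values that bracket $r_0$. The two ingredients needed are the Reck-style decomposition count $d_n\propto n(n-1)$ for an $n$-dimensional canonical multiport (already used earlier in the paper) and Cardano's solution of the resulting depressed cubic.

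First I would write down the cost. The PCS uses one $S_p$ followed by $p$ independent copies of $S_q$, so (up to a constant factor appearing on both sides of the claimed inequality)
\[
f(q)=\frac{N^2}{q^2}-\frac{N}{q}+N(q-1),\qquad q>0,
\]
after the substitution $p=N/q$. Differentiating gives $f'(q)=-2N^2/q^3+N/q^2+N$, and multiplying by $q^3/N$ reduces $f'(q)=0$ to the depressed cubic $q^3+q-2N=0$. Its discriminant $N^2+\tfrac{1}{27}$ is strictly positive, so Cardano's formula produces the unique real root
\[
r_0=\sqrt[3]{N+\sqrt{N^2+\tfrac{1}{27}}}+\sqrt[3]{N-\sqrt{N^2+\tfrac{1}{27}}},
\]
which is precisely the quantity named in the statement.

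To close the argument I would use unimodality. Since $q^3+q-2N$ is strictly increasing in $q$, one has $f'(q)<0$ for $q<r_0$ and $f'(q)>0$ for $q>r_0$, so $f$ strictly decreases on $(0,r_0)$ and strictly increases on $(r_0,\infty)$. Ordering the admissible co-prime values as $q_1<q_2<\cdots<q_j<r_0<q_{j+1}<\cdots$, this yields $f(q_i)\geq f(q_j)$ for $i\leq j$ and $f(q_i)\geq f(q_{j+1})$ for $i\geq j+1$, which is exactly the desired inequality $d_N^{PCS}(p_i,q_i)\geq\min\{d_N^{PCS}(p_j,q_j),d_N^{PCS}(p_{j+1},q_{j+1})\}$. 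The Cardano step is mechanical and the derivative computation is routine; the main conceptual hurdle is the transfer from the continuous minimum of $f$ to the discrete, irregularly spaced admissible set of co-prime factor pairs. What makes this transfer painless is that unimodality of the continuous extension on $(0,\infty)$ reduces the discrete optimization to comparing the two bracketing candidates regardless of how sparsely the co-prime factorizations of $N$ are distributed, so no further arithmetic analysis of $\{q_i\}$ beyond its ordering is required.
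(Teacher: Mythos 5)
Your proposal is correct and follows essentially the same route as the paper: both reduce the first-order condition to the depressed cubic $q^3+q-2N=0$, extract $r_0$ via Cardano, and conclude that the discrete optimum must sit at one of the two admissible $q$-values bracketing $r_0$. The only (cosmetic) difference is that you minimize $d_N^{PCS}$ directly using the sign of $f'$, whereas the paper maximizes $d_N-d_N^{PCS}$ and invokes a second-derivative test; your unimodality argument is, if anything, slightly cleaner.
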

\noindent See Appendix \ref{app:PCS} for the proof of Theorem \ref{the:PCS}.

\subsection{Parallel multi-cascading structure}
\label{subPMCS}

If $p$ or $q$ is not a prime, then $N$ can be decomposed further, $N=\prod_{i=1}^{k}{p_i}^{a_i}$, where $a_i$ is the power of $p_i$, and $p_i$ and $p_j$ are co-prime for any $i\ne j$. If $a_i=1$ for arbitrary $i\in [1,k]$, the PCS can be decomposed into a parallel multi-cascading structure (PMCS) (Fig. \ref{fig:CASCADED}(b)) and the number of BSs necessary, $d_{N}^{PMCS}$, becomes smaller according to Theorem \ref{the:PCS}. Any decomposition resulting in $a_i>1$ is forbidden, as the decomposition reduces the dimensionality $N$ into $N/{p_i}^{a_i-1}$. For example, the optimal PCS for a $20$-dimensional OAM sorter is $(5,4)$. If we decompose the PCS further ($5,2,2$), the dimensionality of the PMCS OAM sorter is reduced into $N'=10$. If $N$ satisfies $N=\prod_{i=1}^{j}p_i, p_i>p_{i+1}$, $p_i$ and $p_j$ are co-prime for any $i\ne j$, then according to Eq. \ref{equ:PCS2}, the optimal $k$-cascading PMCS of a $N$-dimensional OAM sorter is $S_{p_1}\rightarrow p$ times $S_{p_2}\rightarrow (p_1 \cdot p_2)$ times $S_{p_3}\rightarrow\cdots\rightarrow (p_1 \cdot p_2\cdots p_{k-1})$ times $S_{p_k}$.

Generally speaking, the PCS and PMCS given by Theorem \ref{the:PCS} are usually not the optimal PCS and PMCS with the fewest BSs if dimensionality redundancy is allowed. For a particular $N$, there may exist a large factor that cannot be factorized further by Theorem \ref{the:PCS}, while a larger $N$ may avoid this situation. For example, an optimal PMCS 30-dimensional ($5\times 3\times 2$) OAM sorter requires fewer BSs than a 26-dimensional ($13\times 2$) one (Fig. \ref{fig:BSsNumber}). The optimal PCS and 3-cascading PMCS of a $154$-dimensional OAM sorter without dimensionality redundancy are $(22,7)$ and  $(11,7,2)$, respectively. However, if dimensionality redundancy is allowed, a 4-cascading structure ($(7,5,3,2)$) can reduce the number of BSs further. That is to say, a $210$-dimensional OAM sorter requires fewer BSs than a $154$-dimensional one. 

In fact, the PMCS $(5,3,2)$ and $(7,5,3,2)$ are the optimal combinations. That is, no PMCS OAM sorter with $N>30$(210) requires fewer BSs than the $30(210)$-dimensional one with PMCS $(5,3,2)((7,5,3,2))$. 
\begin{theorem}
If $N$ satisfies $N=\prod_{i=1}^{k}p_i, p_i>p_{i+1}$, $p_i$ and $p_{i+1}$ are consecutive primes and $p_k<5$,, the PMCS $N$-dimensional OAM sorter requires fewer BSs than any larger dimensional one. 
\label{the:PMCSOPTI}
\end{theorem}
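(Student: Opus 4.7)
The plan is to work with the normalized cost $\rho(N):=d_N^{PMCS}/N$ and exploit the one-step recursion $d_N^{PMCS}=d_{N/q}^{PMCS}+(N/q)\binom{q}{2}$ that falls out of the PMCS construction when $q$ is the smallest prime factor of the squarefree $N$. Dividing by $N$ gives $\rho(N)=(q-1)/2+\rho(N/q)/q$, which I would iterate down a primorial to obtain $\rho(N)=\tfrac{1}{2}\sum_{i=1}^{k}(p_i-1)/T_{i+1}$ with tail primorials $T_j=p_j p_{j+1}\cdots p_k$. Because the $T_j$ grow at least geometrically in $k-j$, this series is dominated by its first few terms uniformly in $k$; its limiting value is $\approx 1.46$ when $p_k=2$ and $\approx 1.92$ when $p_k=3$, which I would convert into the clean bounds $\rho(N)<3/2$ for every primorial with $p_k=2$ and $\rho(N)<2$ for every primorial with $p_k=3$.

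Next I classify any squarefree competitor $N'>N$ by its smallest prime factor $q_m$ and bound $\rho(N')$ from below using the same normalized recursion. If $q_m\geq 5$, one application gives $\rho(N')\geq(q_m-1)/2\geq 2$, so $d_{N'}^{PMCS}\geq 2N'>2N\geq d_N^{PMCS}$ by the first step. If $q_m\in\{2,3\}$, I peel off the smallest prime of $N'$ and compare with the analogous peel of $N$: for instance, when $p_k=2$ and $q_m=3$, $d_{N'}^{PMCS}=d_{N'/3}^{PMCS}+N'$ with $N'/3$ coprime to $6$, so $\rho(N'/3)\geq 2$ and hence $d_{N'}^{PMCS}\geq 5N'/3>5N/3>d_N^{PMCS}$, using $\rho(N)<3/2<5/3$. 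The remaining subcases ($q_m=2$, further split by whether $3\mid N'$, and $q_m=3$ when $p_k=3$) are handled in the same spirit by peeling the common small primes of $N$ and $N'$ in parallel and applying the $\rho$-bound from the first step.

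The main obstacle is the self-similar case in which $N'$ inherits all of the small primes of $N$ and differs only higher up, e.g.\ $N=30=2\cdot 3\cdot 5$ versus $N'=42=2\cdot 3\cdot 7$. Peeling the common factors $2$ and $3$ reduces the inequality $d_N^{PMCS}<d_{N'}^{PMCS}$ to $\binom{5}{2}<\binom{7}{2}$, which is immediate; but in deeper nesting the reduced statement becomes a Theorem~\ref{the:PMCSOPTI}-style inequality one level up, with smallest surviving prime at least $5$. That inner statement is false in full generality (it fails for $p_k\geq 5$ without the coprime-to-$6$ restriction, as the counterexample $N=35$ versus $N'=42$ with costs $91$ and $63$ shows), so the hypothesis $p_k<5$ is essential: it is exactly what guarantees that after peeling the common small primes of $N$ and $N'$, the residual competitor has smallest prime $\geq 5$ and the uniform bound $\rho\geq 2$ takes over. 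I would formalize this by induction on the number of primes $k$, with base cases $N=6$ and $N=15$ verified by enumerating the finitely many squarefree competitors strictly between $N$ and the next primorial in its chain, and then propagate the result via the peeling recursion above.
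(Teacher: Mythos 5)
Your route is genuinely different from the paper's. The paper argues by contradiction at the level of a single sub-sorter: it supposes some $S_{p_j}$ in the PMCS could be replaced by a cheaper coprime product $S_{ab}$ and observes that, because the factors of $N$ already exhaust all primes from $p_k\in\{2,3\}$ up to $p_1$, any new factors $a,b$ would collide with existing ones and reduce the dimensionality. Your normalized-cost analysis ($\rho(N)=d_N^{PMCS}/N$, the peeling recursion, uniform upper bounds along the primorial chain and lower bounds for competitors classified by smallest prime factor) engages directly with arbitrary $N'>N$, which the paper's argument does not really do, and your counterexample $d_{35}^{PCS}=91>63=d_{42}^{PMCS}$ is a genuinely useful observation showing why $p_k<5$ is needed. (One bookkeeping note: the paper's Eq.~(\ref{equ:dnpmcs}) counts \emph{two} multiport BSs per sorter, so all your costs are half the paper's, e.g.\ $d_{30}^{PMCS}=80$ rather than $40$; this is harmless for comparisons but should be stated.)

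There is, however, a quantitative gap precisely at the self-similar case you flag, and your proposed resolution does not close it. Take $N=210$ (so $d_{210}^{PMCS}=301$ in your normalization, $\rho(210)=43/30$) and a competitor with $6\mid N'$ but $5\nmid N'$. Peeling $2$ and $3$ gives $d_{N'}^{PMCS}=d_{m}^{PMCS}+N'$ with $m=N'/6$ coprime to $6$, and your uniform bound $\rho(m)\geq 2$ yields only $d_{N'}^{PMCS}\geq \tfrac{4}{3}N'>280$, which does not exceed $301$; concretely, for $N'=222=6\cdot 37$ the bound gives $296<301$ even though the true value is $888$. The bound $\rho(m)\geq 2$ is tight only at $m=5$, which is excluded here since $m>35$; what is actually needed is $\rho(m)\geq\rho(35)=13/5$ for all $m>35$ coprime to $6$ --- but that is again a Theorem~\ref{the:PMCSOPTI}-type minimality statement one level up the prime chain, so the circularity you identified is not dispelled merely by invoking ``$\rho\geq 2$ takes over.'' Relatedly, the induction scheme (enumerate competitors up to the next primorial, then propagate) leaves unspecified how competitors beyond the next primorial are handled for the \emph{current} $N$ without presupposing the result for the larger primorial. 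Both issues are repairable --- e.g.\ by proving the sharper residual bound $\rho(m)\geq\rho(p_{k+1}p_{k+2}\cdots)$ for residuals coprime to the peeled primes, and by using the linear growth $d_{N'}^{PMCS}\geq\rho_{\min}N'$ to reduce to a finite window of $N'$ for each primorial --- but as written the argument does not yet go through.
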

\noindent See Appendix \ref{app:PMCS}  for the proof of Theorem \ref{the:PMCSOPTI}.

\subsection{Time-delay cascading structure}
\label{subTDCS}

Fig. \ref{fig:CASCADED}(c) shows a time-delay cascading structure (TDCS). In the TDCS, only one $S_p$ and one $S_q$ are needed. The output states of different ports of $S_q$ successively input into the corresponding input ports of $S_p$ according to the TD values. Though requires more time resources, the TDCS greatly reduces the number of BSs. The following theorem gives the optimal TDCS for an $N$-dimensional OAM sorter without dimensionality redundancy:
\begin{theorem}
If $N$ and $\{(p_i,q_i)\}$ satisfy the conditions in Theorem \ref{the:PCS}, then the number of BSs necessary for the TDCS of an $N$-dimensional OAM sorter $d_N^{TDCS}(p_i,q_i)\ge d_N^{TDCS}(p_m,q_m)$, where $p_m$ satisfies $p_m-\sqrt{N}=min\{p_i-\sqrt{N}\}$.
\label{the:TDCS}
\end{theorem}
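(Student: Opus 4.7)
The plan is to reduce the theorem to a routine one-variable optimization. First I would write down the BS count for the TDCS. Since the TDCS uses exactly one $S_p$ and one $S_q$, and each $n$-dimensional sorter consists of two canonical $2n$-port BSs, each decomposable via the Reck \emph{et al.}\ construction into $\binom{n}{2}$ elementary BSs, the total is $d_N^{TDCS}(p,q) = 2\binom{p}{2} + 2\binom{q}{2} = p^2 + q^2 - p - q$.

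Next, I would exploit the constraint $pq = N$ by setting $s = p + q$. Since $p^2 + q^2 = s^2 - 2pq = s^2 - 2N$, this gives $d_N^{TDCS}(p,q) = s^2 - s - 2N$, a quadratic in $s$ that is strictly increasing for $s > 1/2$. The AM--GM bound $s \geq 2\sqrt{N} > 1/2$ (valid for any $N \geq 1$) then shows that minimizing $d_N^{TDCS}$ is equivalent to minimizing $s$ over the admissible factorizations.

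I would then analyze $s(p) = p + N/p$ as a function of a positive real variable. Its derivative $1 - N/p^2$ is strictly positive for $p > \sqrt{N}$, so $s(p)$ is strictly increasing on $(\sqrt{N}, \infty)$. The constraint $p > q$ combined with $pq = N$ forces $p > \sqrt{N}$, so among the discrete co-prime factorizations $\{(p_i, q_i)\}$, both $s$ and $d_N^{TDCS}$ attain their minimum at the $p_m$ closest to $\sqrt{N}$ from above, i.e., at the $p_m$ with $p_m - \sqrt{N} = \min_i\{p_i - \sqrt{N}\}$. This yields the stated inequality $d_N^{TDCS}(p_i, q_i) \geq d_N^{TDCS}(p_m, q_m)$ for every admissible $(p_i, q_i)$.

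The main obstacle is essentially none of conceptual substance: compared with Theorem \ref{the:PCS}, whose optimum is cubic in $q$ (producing the Cardano expression for $r_0$), the TDCS cost collapses to a quadratic in $s$, bypassing any cubic analysis entirely. The only real subtlety is verifying that the real-variable monotonicity of $s(p)$ transfers to the integer-plus-coprimality constraint of the problem, but this follows immediately from strict monotonicity on the admissible interval $p > \sqrt{N}$.
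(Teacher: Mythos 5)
Your proposal is correct and reaches the same conclusion as the paper, but by a noticeably cleaner route. The paper works directly with $g(p)=p(p-1)+q(q-1)=p^2-p+N^2/p^2-N/p$, sets $g'=0$ to obtain the quartic $2p^4-p^3+Np-2N^2=0$, identifies $\pm\sqrt{N}$ as its only real roots (the factorization is $(p^2-N)(2p^2-p+2N)=0$, with the quadratic factor having negative discriminant), and then checks $g''>0$ to conclude that $p=\sqrt{N}$ is the minimizer before passing to the nearest admissible integer. Your substitution $s=p+q$ turns the cost into the manifestly increasing quadratic $s^2-s-2N$ and reduces everything to the monotonicity of $s(p)=p+N/p$ on $p>\sqrt{N}$, which follows from $p>q$ and $pq=N$. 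This bypasses the quartic entirely and, as a bonus, makes the final discretization step more transparent than in the paper: since every admissible $p_i$ lies strictly above $\sqrt{N}$, strict monotonicity on that half-line immediately gives the minimum at the smallest $p_i$, i.e., the one realizing $\min_i\{p_i-\sqrt{N}\}$, whereas the paper's appeal to convexity plus "$p$ is an integer and co-prime to $q$" leaves this last transfer implicit. Both arguments establish Theorem \ref{the:TDCS}; yours is the more elementary and self-contained of the two.
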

\noindent 
See Appendix \ref{app:TDCS} for the proof of Theorem \ref{the:TDCS}. 
\begin{figure}
\centering
\resizebox{8cm}{6.5cm}{\includegraphics{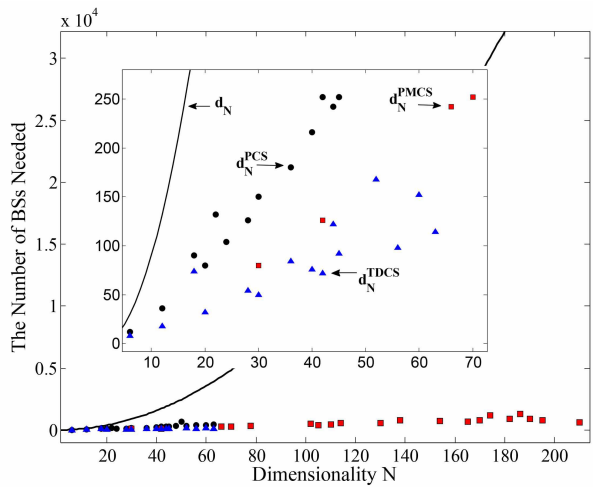}}
\caption{BSs necessary with dimensionality for the OAM sorter with different structures.}
\label{fig:BSsNumber}
\end{figure}

It can be proven that the numbers of BSs necessary for different structures satisfy $d_N=\mathcal{O}(N^2-N), d_N^{PCS}<\mathcal{O}((q+p/q)N)$ and $d_N^{TDCS}<\mathcal{O}((p/q+q/p)N)$. If the factors in the PMCS are all primes, then $d_N^{PMCS}<\mathcal{O}((p_k+4/3)N)$. Fig. \ref{fig:BSsNumber} shows the BSs necessary with dimensionality for OAM sorters of different structures. The cascading structures greatly reduce the number of BSs. For example, when $N=30$, $d_N=870$, while $d_N^{PCS}=150,d_N^{PMCS}=80$ and $d_N^{TDCS}=50$. Even for the much larger dimensionality $N=210$, $d_N^{PMCS}=602$ and $d_N^{TDCS}=392$.

\section{Fidelity of the OAM sorter with imperfect devices}

\begin{figure*}
\resizebox{18cm}{11 cm}{\includegraphics{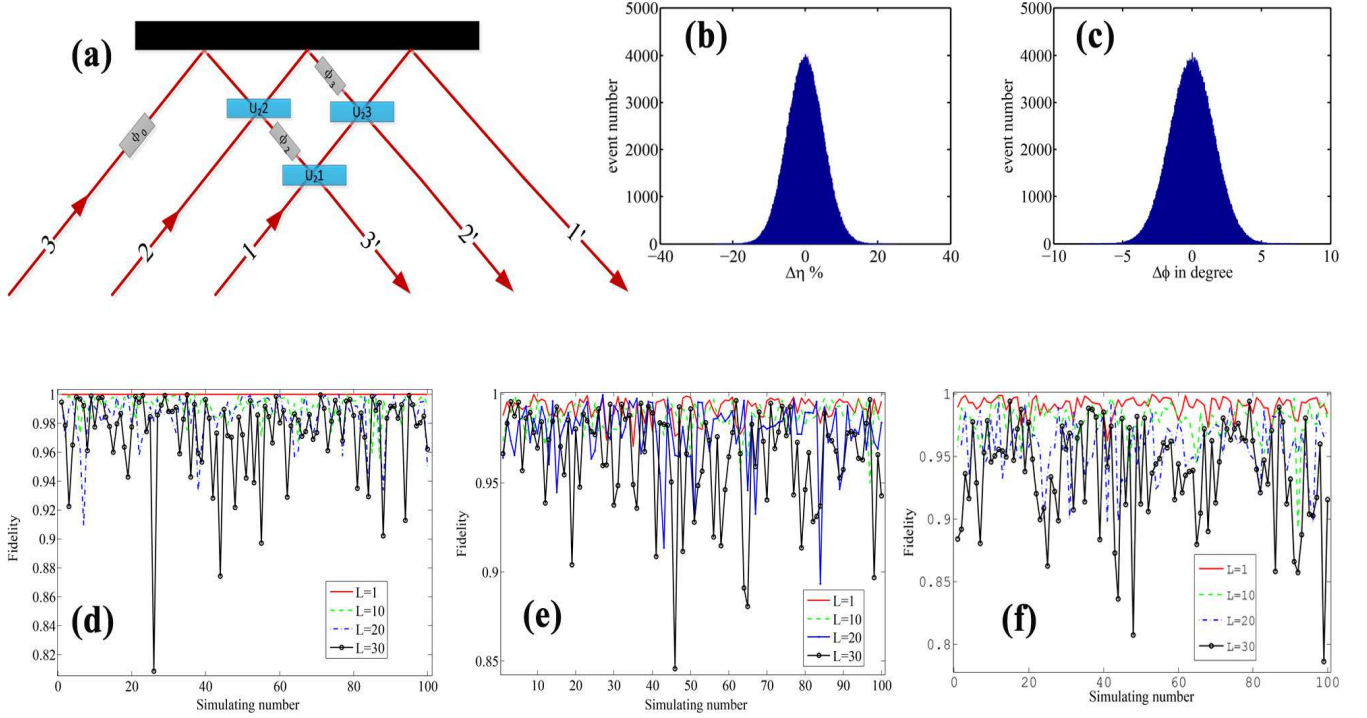}}
\caption{The fidelity analysis of the OAM sorter with imperfect devices. (a) The experimental scheme of a symmetric 3-dimensional BS; (b) The simulated Gaussian distribution of deviations of reflectivities of BSs; (c) The simulated Gaussian distribution of deviations of phase shifters; The random simulation fidelity of (d) the 2-dimensional, (e) the 3-dimensional and (f) the cascading 6-dimensional OAM sorters for different OAM states ($l=1,10,20$ and $30$).}
\label{fig:sortererror}
\end{figure*}

For a practical system, the devices are hardly to meet requirements of the ideal model. Thus, the fidelity is an important index to evaluate the agreement between the practical operation and the desired one. Here, we adopt the definition of the fidelity of a quantum operation in Ref. \cite{Pedersen2007}
\begin{equation}
F=\frac{1}{N(N+1)}[Tr(MM^{\dagger})+|Tr(M)|^2].
\label{eq:fidelity}
\end{equation}
A classical symmetric 3-dimensional BS, expressed by Eq. \ref{eq:UN} is shown in Fig. \ref{fig:sortererror}(a). It consists of 3 BSs ($U_{2}1,U_{2}2$ and $U_{2}3$), 3 phase shifters ($\phi_0,\phi_2$ and $\phi_3$) and a mirror. The parameters are desired as $\eta_1=1/3,\eta_2=\eta_3=1/2,\phi_0=4\pi/3,\phi_2=4\pi/3$ and $\phi_3=5\pi/6$, where $\eta_1,\eta_2$ and $\eta_3$ are the reflectivities of $U_{2}1,U_{2}2$ and $U_{2}3$, respectively. By introducing deviations of the reflectivities and phase shifters, $\Delta\eta_i$ and $\Delta\phi_j$, into the system, the 3-dimensional UT becomes
\begin{equation}
\begin{aligned}
U_3'=&
\begin{pmatrix}
\sqrt{1-\eta_3'} & -\sqrt{\eta_3'} & 0 \\
\sqrt{\eta_3'} & \sqrt{1-\eta_3'} & 0\\
0 & 0 & 1
\end{pmatrix}
\begin{pmatrix}
1 & 0 & 0 \\
0 & e^{i\phi_3'} & 0 \\
0 & 0 & 1
\end{pmatrix}\\
&\begin{pmatrix}
\sqrt{1-\eta_1'} & 0 & -\sqrt{\eta_1'} \\
0 & 1 & 0 \\
\sqrt{\eta_1'} & 0 & \sqrt{1-\eta_1'}
\end{pmatrix}
\begin{pmatrix}
1 & 0 & 0 \\
0 & 1 & 0 \\
0 & 0 & e^{i\phi_2'}
\end{pmatrix}\\
&\begin{pmatrix}
1 & 0 & 0 \\
0 & \sqrt{1-\eta_2'} & -\sqrt{\eta_2'} \\
0 & \sqrt{\eta_2'} & \sqrt{1-\eta_2'} \\
\end{pmatrix}
\begin{pmatrix}
1 & 0 & 0 \\
0 & 1 & 0 \\
0 & 0 & e^{i\phi_0'}
\end{pmatrix},
\end{aligned}
\label{eq:U_3'}
\end{equation}
where $\eta_i'=\eta_i+\Delta\eta_i$, $\phi_j'=\phi_j+\Delta\phi_j$ and $i,j=1,2,3$. The path-depending phase shifters can be expressed as 
\begin{equation}
P_3'=
\begin{pmatrix}
1 & 0 & 0\\
0 & e^{il(2\pi/3+\Delta\phi_{m2})} & 0\\
0 & 0 & e^{il(4\pi/3+\Delta\phi_{m3})}
\end{pmatrix},
\label{eq:P_3'}
\end{equation}
where $\Delta\phi_{mk}$ is the phase deviation of path-$k$ and $k=2,3$. Thus, the practical UT of a 3-dimensional OAM sorter becomes $S_3'=U_3'P_3'U_3'$. If all deviations, $\Delta\eta_i,\Delta\phi_j$ and $\Delta\phi_{mk}$, are equal to 0, $S_3'$ approaches to the desired UT $S_3=U_3P_3U_3$, where $U_3$ and $P_3$ are expressed by Eq. \ref{eq:U_3'} and \ref{eq:P_3'}, respectively, with all deviations being 0. $U_N1$ and $U_N2$ in Fig. \ref{fig:OAMSORTER}(a) have been set as the same one here, as it is easy to realize in experiments by inputting the photon state back into $U_N1$. Hence, the fidelity of a practical 3-dimensional OAM sorter is $F_3=\frac{1}{3\times 4}[Tr(M_3M_3^{\dagger})+|Tr(M_3)|^2]$, where $M_3=S_3^{\dagger}S_3'$.

In order to simulate the fidelity of a practical OAM sorter, we assume that all deviations are independent and follow Gaussian distributions (see Fig. \ref{fig:sortererror}(b) and (c)). By considering the technological level of commercial products, the full width at half maximum (FWHM) of deviations of reflectivities of all BSs are about $11.9\%$ (Fig. \ref{fig:sortererror}(b)). The FWHM of deviations of all phase shifters are about $3.8$ degree (Fig. \ref{fig:sortererror}(c)). The random simulation is repeated for 100 times, and the fidelities for OAM states with $l=1,10,20$ and $30$ are shown in Fig \ref{fig:sortererror}(e).

The random simulation fidelity of a 2-dimensional OAM sorter is shown in Fig. \ref{fig:sortererror}(d). The fidelity of a cascading structure OAM sorter is determined by the number of cascading levels. For a $k$-level cascading structure, the fidelity is $F_N=F_{p_1}F_{p_2}\cdots F_{p_k}$.  Fig. \ref{fig:sortererror}(f) gives the simulation fidelity of a 6-dimensional OAM sorter with 2-level cascading structure ($6=3\times 2$). The average simulation fidelities of Fig. \ref{fig:sortererror}(d), (e) and (f) are shown in TABLE \ref{tab:1}. According to the random simulation results, even the worst fidelity of the cascading structure sorter for a 30-order OAM state is close to 0.8, while the corresponding average fidelity is $<F_6>=0.934$. With state of the art technique, the fidelity can be higher. Thus, the OAM sorter proposed here is realizable. The fidelity of a practical N-dimensional OAM sorter can be analyzed analogously.

On the other hand, the efficiency of practical BSs should be considered for a high-dimensional OAM sorter. For example, the state of art efficiency of a practical BS can be $99\%$, and the total efficiency of a 30-dimensional PMCS OAM sorter ($30=5\times 3\times 2$) becomes $0.99^{5+3+2}=0.904$, which is acceptable for a practical system. However, for a much higher-dimensional sorter, the losses of BSs and any other optical elements make the sorter being less efficiency. Of course, there are some other factors that may decrease the sorting fidelity,such as phase drift with temperature. Thus, temperature control is also necessary for a high-fidelity sorter.
\begin{table}
\centering
\caption{The average random simulation fidelities of the 2-dimensional, 3-dimensional and 6-dimensional OAM sorter.}
\begin{tabular}{ccccc}
\hline \hline
 & $l=1$ & $l=10$ & $l=20$ & $l=30$\\
$<F_2>$ & 1.000 & 0.991 & 0.985 & 0.975\\
$<F_3>$ & 0.990 & 0.986 & 0.977 & 0.958\\
$<F_6>$ & 0.990 & 0.977 & 0.961 & 0.934\\
\hline \hline
\end{tabular}
\label{tab:1}
\end{table}

\section{conclusion}

In conclusion, we have proposed an OAM sorter without destroying the photon states. The sorter consists of $2N$-port BSs and Dove prisms. The cascading structures solve the problem of extending the OAM sorter to higher dimensions. As there is no azimuthal phase change for any OAM state, the Gouy phases of different OAM states do not affect each other. Thus, there is no crosstalk in principle. Since arbitrary discrete finite-dimensional OAM UT as well as the sorters for a single-photon state can be realized by assembling different dimensional multiport BSs \cite{Torma1996,Reck1994}, this structure has a promising prospect for quantum processing. It is worth to be emphasized that the OAM sorter proposed here also preserves SAM, thus the high-dimensional OAM UT is compatible with traditional optics. The multiport BSs used here can be produced by the standard approach shown in Ref. \cite{Reck1994}, which makes the sorters easy to extend. Additionally, by considering the assembly property, the proposed OAM sorters are favorable from the perspective of industrialization.

\noindent \emph{\textbf{Outlook.}} Although traditional integrated optical circuits \cite{Shadbolt2011,Crespi2013,Metcalf2013,Carolan2015} are not compatible with OAM light, special optical fibers for OAM light communication have been reported, and the transmission length record is over 1 kilometer with low crosstalk \cite{Bozinovic2011,Bozinovic2012,Bozinovic2013,Mushref2014}. Furthermore, great progresses has been made on OAM light generation and multiplexing with integrated optical circuit technologies \cite{Grzybowski2012, Guan2014, Pu2015, Ren2016}. With the achievements above, the cascading-structure OAM sorters proposed here possess bright prospects for extending the dimensionality of free-space and on-chip quantum information processing.

\section*{Acknowledgments}
\label{Acknowledgments}

This work has been supported by the National Basic Research Program of China (Grant Nos. 2011CBA00200 and 2011CB921200), the National Natural Science Foundation of China (Grant Nos. 61475148, 61205118, 61675189 and 11304397) and the Strategic Priority Research Program (B) of the Chinese Academy of Sciences (Grant Nos. XDB01030100 and XDB01030300). 

\setcounter{theorem}{0}
\appendix
\numberwithin{equation}{section}
\section{Parallel cascading structure}
\label{app:PCS}
The number of BSs sufficient for a classical $N$-dimensional UT is ${N\choose 2}$ \cite{Reck1994}, where $N\choose 2$ is the binomial coefficient. Thus, the number of BSs required for a direct $N$-dimensional OAM sorter is $d_N=N(N-1)$, increasing as $\mathcal{O}(N^2-N)$. The number of BSs necessary for a PCS OAM sorter is
\begin{equation}
\begin{aligned}
d_N^{PCS}(p,q)&=2({p\choose 2}+p{q\choose 2})=\frac{N}{q}(p-1)+N(q-1)\\
&=(q+\frac{p}{q}-1-\frac{1}{q})N<(q+p/q)N.
\end{aligned}
\label{equ:PCS}
\end{equation}
where $N=pq$, and $p$, $q$ are co-prime. And
\begin{equation}
d_N-d_N^{PCS}(p,q)=p(p-1)(q+1)(q-1)>0,
\label{equ:PCS1}
\end{equation}
as $p,q\geq 2$. Thus, the number of BSs sufficient for a PCS $N$-dimensional OAM sorter is always smaller than a direct $N$-dimensional sorter. There are two combinations of PCS for every $(p,q)$. The combinations are: (a) one $S_p$ and $p$ times $S_q$; (b) one $S_q$ and $q$ times $S_p$. Let $p>q$, then
\begin{equation}
d_N^{PCS}(p,q)-d_N^{PCS}(q,p)=(p-1)(q-1)(q-p)<0
\label{equ:PCS2}
\end{equation}
always holds. Thus, the combination $(p,q)$ requires fewer BSs than that of $(q,p)$.

\begin{theorem}
For an arbitrary non-prime $N$, if there exists a set of co-prime numbers $\{(p_i,q_i)|p_i>q_i,p_i,q_i\in (1,N)\}$ satisfying $N=p_i\cdot q_i$, then $d_N^{PCS}(p_i,q_i)\ge min\{d_N^{PCS}(p_j,q_j),d_N^{PCS}(p_{j+1},q_{j+1})\}$, where $q_1<q_2<\cdots<q_j<r_0<q_{j+1}<\cdots$, and $r_0$ satisfies $r_0=\sqrt[3]{N+\sqrt{N^2+\frac{1}{27}}}+\sqrt[3]{N-\sqrt{N^2+\frac{1}{27}}}$.
\label{appthe:PCS}
\end{theorem}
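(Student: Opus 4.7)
The plan is to treat $d_N^{PCS}(p,q)$ as a continuous function of a single real variable $q$ on the interval $(1,\sqrt{N})$ via the substitution $p=N/q$, locate its unique minimum, and then read off the discrete consequence for admissible coprime pairs $(p_i,q_i)$.

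First I would use the identity $d_N^{PCS}(p,q)=p(p-1)+N(q-1)$ that follows from Eq.~(\ref{equ:PCS}), substitute $p=N/q$, and obtain the one-variable function
\begin{equation}
g(q):=d_N^{PCS}(N/q,q)=\frac{N^{2}}{q^{2}}-\frac{N}{q}+Nq-N.
\label{eq:gdef}
\end{equation}
Next I would differentiate, set $g'(q)=0$, and clear denominators by multiplying through by $q^{3}$. This reduces the stationarity condition to the depressed cubic
\begin{equation}
q^{3}+q-2N=0.
\label{eq:cubic}
\end{equation}
Applying Cardano's formula (with the discriminant $(2N)^{2}/4+1/27=N^{2}+1/27>0$, guaranteeing a single real root) yields exactly
\begin{equation}
r_{0}=\sqrt[3]{N+\sqrt{N^{2}+\tfrac{1}{27}}}+\sqrt[3]{N-\sqrt{N^{2}+\tfrac{1}{27}}},
\label{eq:r0}
\end{equation}
which is the $r_{0}$ in the statement.

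Then I would verify that $r_{0}$ is a minimum rather than a maximum or inflection. The cleanest route is to observe that $g(q)\to +\infty$ as $q\to 0^{+}$ (because of the $N^{2}/q^{2}$ term) and as $q\to \infty$ (because of the $Nq$ term), so the unique real critical point given by Cardano must be the global minimum; alternatively one checks $g''(r_{0})=\tfrac{2N}{r_{0}^{3}}(3N/r_{0}-1)>0$ using the fact that $r_{0}\sim(2N)^{1/3}\ll 3N$. Consequently $g$ is strictly decreasing on $(0,r_{0})$ and strictly increasing on $(r_{0},\infty)$. Combined with Eq.~(\ref{equ:PCS2}) (which is why I may restrict to $p>q$, i.e.\ $q<\sqrt{N}$), unimodality immediately gives: for any admissible divisor $q_{i}$, $g(q_{i})\ge \min\{g(q_{j}),g(q_{j+1})\}$ whenever $q_{j}<r_{0}<q_{j+1}$ in the increasing enumeration of admissible $q$'s, which is the claimed inequality.

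The main obstacle, such as it is, is only bookkeeping: I need to check that $r_{0}$ actually lies in the admissible range $(1,\sqrt{N})$ so that the minimum is not attained at a boundary effect. For $N\ge 5$ this follows from $r_{0}\sim(2N)^{1/3}$ and the elementary inequality $(2N)^{1/3}<\sqrt{N}$, i.e.\ $N>4$; the finitely many smaller non-prime $N$ can be checked by hand. Once this verification is in place, the theorem is an immediate corollary of the unimodality of $g$.
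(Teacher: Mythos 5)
Your proposal is correct and follows essentially the same route as the paper: reduce $d_N^{PCS}$ to a one-variable function of $q$, obtain the stationarity cubic $q^{3}+q-2N=0$, solve it by Cardano to get $r_0$, and conclude by unimodality that the discrete optimum sits at one of the two admissible $q$'s bracketing $r_0$. The only cosmetic difference is that the paper maximizes the savings $f=d_N-d_N^{PCS}$ (checking $f''<0$) where you minimize $d_N^{PCS}$ directly (checking $g''>0$); your extra remark that $r_0\in(1,\sqrt{N})$ is a small additional piece of bookkeeping the paper omits.
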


\begin{proof}
Let $f(N,p,q)=d_N-d_N^{PCS}(p,q)$, then,$f=N\frac{-q^3+Nq^2+q-N}{q^2}$. The derivative of $f$ is $f^{'}=N(-1-\frac{1}{q^2}+\frac{2N}{q^3})$. Let $f^{'}=0$, then
\begin{equation}
-1-\frac{1}{q^2}+\frac{2N}{q^3}=0\Longrightarrow q^3+q-2N=0
\label{equ:f'}
\end{equation}
Equation \ref{equ:f'} is in the form $x^3+ax+b=0$, where $a=1$ and $b=-2N$. The only one real root for this cubic equation is 
\begin{equation}
\begin{aligned}
r_0&=\sqrt[3]{-\frac{b}{2}+\sqrt{(\frac{b}{2})^2+(\frac{a}{3})^3}}+\sqrt[3]{-\frac{b}{2}-\sqrt{(\frac{b}{2})^2+(\frac{a}{3})^3}}\\
&=\sqrt[3]{N+\sqrt{N^2+\frac{1}{27}}}+\sqrt[3]{N-\sqrt{N^2+\frac{1}{27}}}.
\end{aligned}
\label{equ:r0}
\end{equation}
By taking two derivatives of $f$, we obtain $f^{''}=\frac{2N}{q^3}(-\frac{3N}{q}+1)$. Because $q<N$, $f^{''}<0$ always holds. Hence, the root obtained by Eq. \ref{equ:r0} is the maximum point of $f$. In other words, the smallest number of BSs sufficient for a PCS $N$-dimensional OAM sorter is obtained at the point $q=r_0$. However, $r_0$ is not an integer and function $f$ is not symmetric with respect to $r_0$. Thus, the minimum point should be $q_j$ or $q_{j+1}$, where $q_j$ and $q_{j+1}$ satisfy $q_1<q_2<\cdots<q_j<r_0<q_{j+1}<\cdots$.

This completes the proof of Theorem \ref{appthe:PCS}.

\end{proof}

\section{Parallel multi-cascading structure}
\label{app:PMCS}

\begin{theorem}
If $N$ satisfies $N=\prod_{i=1}^{k}p_i, p_i>p_{i+1}$, $p_i$ and $p_{i+1}$ are consecutive primes and $p_k<5$,, the PMCS $N$-dimensional OAM sorter requires fewer BSs than any larger dimensional one. 
\label{appthe:PMCSOPTI}
\end{theorem}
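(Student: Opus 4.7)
The plan is to induct on $k$, with the primorial case $p_k=2$ as the main line and the shifted case $p_k=3$ handled by the same argument with a different base ($N=3$ in place of $N=2$). The engine is the one-stage recursion
\[d_{qM}^{PMCS} = q(q-1) + q\,d_M^{PMCS},\]
valid whenever $q$ is a prime exceeding every prime factor of the squarefree integer $M$; this comes from placing $q$ as the outermost cascading stage. Combined with the observation that the primes of $N = p_1 p_2 \cdots p_k$ are exactly the $k$ consecutive primes at or above $p_k \in \{2,3\}$, the recursion gives the structural constraint that any squarefree $N' > N$ must have largest prime factor $q_1 \ge p^{+}$, where $p^{+}$ denotes the prime immediately after $p_1$---otherwise $N'$ would divide $N$ and $N' \le N$ would follow.

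For the inductive step I would write $N' = q_1 M$ with $M = N'/q_1$ and subtract the corresponding recursion for $N$ to obtain
\[d_{N'}^{PMCS} - d_{N}^{PMCS} = \bigl[q_1(q_1-1) - p_1(p_1-1)\bigr] + (q_1-p_1)\,d_{N/p_1}^{PMCS} + q_1\bigl(d_M^{PMCS} - d_{N/p_1}^{PMCS}\bigr).\]
The first two summands are strictly positive because $q_1 > p_1$, so positivity of the total reduces to showing $d_M^{PMCS} \ge d_{N/p_1}^{PMCS}$ for every admissible $M$. The inductive hypothesis at level $k-1$ covers $M > N/p_1$; at $M = N/p_1$ the inequality is an equality and the first two terms already force strictness; and the case $M = 1$ (i.e.\ $N' = q_1$ is prime with $q_1 > N$) is settled because $q_1(q_1-1) \ge N(N+1)$ dominates the linear upper bound $d_N^{PMCS} < (p_k+4/3)N$ recorded in the body of the paper for all $N$ arising in the induction.

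The delicate subcase is $1 < M < N/p_1$, where $M$ is a proper squarefree divisor of $N$ falling in the narrow window $(N/q_1,\, N/p_1)$. I would handle it via the stage-deletion identity
\[d_{N/\rho_\ell}^{PMCS} = d_N^{PMCS} - T_\ell - \frac{\rho_\ell-1}{\rho_\ell}\sum_{j>\ell}T_j,\qquad T_j = (\rho_1\cdots\rho_{j-1})\,\rho_j(\rho_j-1),\]
where $\rho_1 > \cdots > \rho_k$ lists the primes of $N$ in decreasing order. A direct evaluation shows that removing the largest prime $p_1 = \rho_1$ is the unique minimizer of $d^{PMCS}$ among single deletions, and the identity iterates to handle multi-prime deletions. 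A prime-gap estimate then shows that the window $(N/p^{+},\, N/p_1)$ contains no squarefree divisor of $N$ other than $N/p_1$ itself once $k \ge 5$, so the delicate subcase becomes vacuous. For the residual values $k=3$ and $k=4$, exactly one exceptional divisor arises in each case (namely $M=5$ for $k=3$ and $M=21$ for $k=4$), and one verifies $d_M^{PMCS} > d_{N/p_1}^{PMCS}$ by direct substitution.

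The main obstacle is precisely this combinatorial lemma about divisors inside the narrow window, because $d^{PMCS}$ is not monotone under general modifications of the prime set: swapping a small prime for a larger one can raise $d^{PMCS}$ while lowering $M$, or vice versa. The saving feature is that the window shrinks rapidly with $k$, so a combination of an emptiness argument (for $k\ge 5$) and a finite case check (for $k\in\{3,4\}$) closes the proof.
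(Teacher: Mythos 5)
Your route is entirely different from the paper's: the paper argues by a short local contradiction (any larger-dimensional sorter would have to arise by replacing some sub-sorter $S_{p_j}$ with a further co-prime cascade $(a,b)$, and since the primes below $p_1$ are already exhausted and $p_k<5$, no admissible pair $a,b$ exists), whereas you set up a global induction on $k$ via the recursion $d^{PMCS}_{qM}=q(q-1)+q\,d^{PMCS}_M$. Your recursion, the decomposition of $d^{PMCS}_{N'}-d^{PMCS}_N$, and the branches $M=1$ and $M\ge N/p_1$ are all sound, and the induction is in principle a more honest way to address arbitrary larger $N'$ than the paper's single-substitution argument.

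The gap is in the ``delicate subcase.'' First, $M=N'/q_1$ need not be a divisor of $N$ at all --- it is an arbitrary squarefree integer whose prime factors lie below $q_1$ --- so the stage-deletion identity and the divisor-in-a-window count do not apply to it. Second, the relevant window is $(N/q_1,\,N/p_1)$, not $(N/p^{+},\,N/p_1)$; since $q_1$ ranges over all primes exceeding $p_1$, the window can be essentially all of $(1,\,N/p_1)$ and is far from empty for $k\ge 5$. Concretely, for $N=210$ take $N'=3\cdot 101=303$: then $q_1=101$, $M=3$, and $d^{PMCS}_M=6<80=d^{PMCS}_{N/p_1}$, so the inequality to which you reduced everything is false, yet $d^{PMCS}_{N'}>d^{PMCS}_N$ still holds because $q_1(q_1-1)$ dominates. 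In other words, $d^{PMCS}_M\ge d^{PMCS}_{N/p_1}$ is only a sufficient condition for positivity, and for large $q_1$ it fails; you need a quantitative estimate such as $q_1(q_1-1)-p_1(p_1-1)+(q_1-p_1)\,d^{PMCS}_{N/p_1}\ge q_1\bigl(d^{PMCS}_{N/p_1}-d^{PMCS}_M\bigr)$ in that regime (e.g.\ by splitting on whether $q_1^2-q_1$ already exceeds the linear bound $(p_k+4/3)N$ on $d^{PMCS}_N$), which the proposal does not supply. A smaller defect: for $p_k=3$ the prime $2$ is available to $N'$, so the structural claim that every squarefree $N'>N$ has largest prime factor $q_1>p_1$ fails (e.g.\ $N=105$, $N'=210$, $q_1=7=p_1$); there the first two summands vanish and strictness must come entirely from the third term.
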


\begin{proof}
We assume there exists a PCS OAM sorter requires fewer BSs than one sub-sorter, $S_{p_j}$, of the PMCS $N$-dimensional sorter. That is, there exists a combination $(a,b)$ satisfying $d_{N^{'}=ab}^{PCS}<d_{N=p_j}$, where $p_j>a>b$. $a$ and $b$ should satisfy $a,b<p_k$, otherwise, the dimensionality of the sorter will be reduced. Because $p_k<5$, then $a\geq 3$ and $p_k\leq 3$, which is contradictory with $a,b<p_k$. Hence, no PCS can substitute for $S_{p_j}$ without reducing the dimensionality of the OAM sorter.

This completes the proof of Theorem \ref{appthe:PMCSOPTI}.

\end{proof}

For a PMCS OAM sorter with $N=\prod_{i=1}^{k}p_i$, the number of BSs necessary is 
\begin{equation}
\begin{aligned}
& d_N^{PMCS}=2({p_1\choose 2}+p_1{p_2\choose 2}+{p_1}{p_2}{p_3\choose 2}+\cdots\\
&\qquad\qquad\quad+{p_1}{p_2}\cdots {p_{k-1}}{p_k\choose 2})\\
&=(\frac{p_1-1}{\prod_{i=2}^{k}p_i}+\frac{p_2-1}{\prod_{i=3}^{k}p_i}+\cdots+\frac{p_{k-1}-1}{p_k}+(p_k-1))N,
\end{aligned}
\label{equ:dnpmcs}
\end{equation}
where $k\geq 3$. According to the list of known prime numbers \cite{Primlist,Primegap}, any consecutive primes $p_i$ and $p_{i+1}$ satisfy $\frac{p_{i+1}}{p_i}\leq \frac{5}{3}$, the equality holds if and only if $p_i=3$ and $p_{i+1}=5$. Hence, Eq. \ref{equ:dnpmcs} becomes

\begin{equation}
\begin{aligned}
d_N^{PMCS}&<[(\frac{1}{\prod_{i=3}^{k}p_i}+\frac{1}{\prod_{i=4}^{k}p_i}+\cdots+\frac{1}{p_k}+1)\times\frac{5}{3}\\
&\quad-(\frac{1}{\prod_{i=2}^{k}p_i}+\frac{1}{\prod_{i=3}^{k}p_i}+\cdots+\frac{1}{p_k}+1)+p_k]N\\
&=[(\frac{1}{\prod_{i=3}^{k}p_i}+\frac{1}{\prod_{i=4}^{k}p_i}+\cdots+\frac{1}{p_k}+1)\times\frac{2}{3}\\
&\quad-\frac{1}{\prod_{i=2}^{k}p_i}+p_k]N\\
&<[(2^{-(k-2)}+2^{-(k-3)}+\cdots+2^{-1}+1)\times\frac{2}{3}\\
&\quad+p_k]N\\
&<(\frac{4}{3}+p_k)N.
\end{aligned}
\label{dnpmcsnum}
\end{equation}

\section{Time-delay cascading structure}
\label{app:TDCS}

In the Time-delay cascading structure (TDCS), only one $S_p$ and one $S_q$ are needed. The number of BSs sufficient for a TDCS $N$-dimensional OAM sorter is
\begin{equation}
\begin{aligned}
d_N^{TDCS}(p,q)&=2({p\choose 2}+{q\choose 2})=p(p-1)+q(q-1)\\
&=\frac{N}{q}(p-1)+\frac{N}{p}(q-1)<(\frac{p}{q}+\frac{q}{p})N.
\end{aligned}
\label{equ:TDCS}
\end{equation}

\begin{theorem}
If $N$ and $\{(p_i,q_i)\}$ satisfy the conditions in Theorem \ref{the:PCS}, then the number of BSs necessary for the TDCS of an $N$-dimensional OAM sorter $d_N^{TDCS}(p_i,q_i)\ge d_N^{TDCS}(p_m,q_m)$, where $p_m$ satisfies $p_m-\sqrt{N}=min\{p_i-\sqrt{N}\}$.
\label{appthe:TDCS}
\end{theorem}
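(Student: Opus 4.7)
The plan is to replace the discrete optimisation with a continuous one, locate its minimum, and then transfer the result back to the admissible set. Starting from Eq.~\ref{equ:TDCS} with the constraint $q=N/p$, I would define the one-variable objective
\begin{equation}
g(p) := p(p-1) + \tfrac{N}{p}\bigl(\tfrac{N}{p}-1\bigr) = p^2 + \frac{N^2}{p^2} - p - \frac{N}{p},
\end{equation}
so that $d_N^{TDCS}(p_i,q_i) = g(p_i)$ on the admissible set. Because $p_i>q_i$ and $p_iq_i=N$ with $p_i,q_i>1$, every admissible $p_i$ lives in the open interval $(\sqrt{N},N)$.

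The heart of the argument is to show that $g$ is strictly increasing on $(\sqrt{N},\infty)$. I would differentiate to obtain
\begin{equation}
g'(p) = 2p - \frac{2N^2}{p^3} - 1 + \frac{N}{p^2},
\end{equation}
and observe by direct substitution that $g'(\sqrt{N})=0$. To control the sign for larger $p$, I would pass to the rescaled variable $t=p/\sqrt{N}>1$, which after collecting terms yields the clean factorisation
\begin{equation}
g'(p) = \frac{t^2-1}{t^3}\bigl[2\sqrt{N}(t^2+1)-t\bigr].
\end{equation}
The prefactor $(t^2-1)/t^3$ is positive for $t>1$, and the bracket is positive because the AM--GM bound $t^2+1\ge 2t$ gives $2\sqrt{N}(t^2+1)\ge 4t\sqrt{N}\ge 4t>t$ whenever $N\ge 1$. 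Hence $g'(p)>0$ for all $p>\sqrt{N}$, so $g$ is strictly monotone there with its unique continuous minimum at $p=\sqrt{N}$.

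The rest is bookkeeping: since $g$ is strictly increasing to the right of $\sqrt{N}$ and every admissible $p_i$ exceeds $\sqrt{N}$, the minimum of $g$ over the discrete set $\{p_i\}$ is attained at the element closest to $\sqrt{N}$ from above, i.e.\ at $p_m$ with $p_m-\sqrt{N}=\min_i\{p_i-\sqrt{N}\}$, which is exactly the claim. The only step requiring real work is the monotonicity inequality for $g'$; once the change of variables $t=p/\sqrt{N}$ is made, it reduces to an AM--GM estimate and mirrors the continuous-minimum argument already used for the PCS case in Theorem~\ref{appthe:PCS}.
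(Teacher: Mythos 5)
Your proof is correct and follows essentially the same route as the paper: both reduce $d_N^{TDCS}$ to the one-variable function $g(p)=p^2-p+N^2/p^2-N/p$, identify $p=\sqrt{N}$ as the continuous minimizer, and conclude that the discrete minimum sits at the admissible $p_i$ closest to $\sqrt{N}$ from above. The only difference is technical: the paper factors the quartic $g'=0$ to obtain the real roots $\pm\sqrt{N}$ and then checks $g''>0$, whereas you certify $g'(p)>0$ on $(\sqrt{N},\infty)$ directly via the substitution $t=p/\sqrt{N}$ and an AM--GM bound; both verifications are valid.
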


\begin{proof}
Let $g(p,q)=d_N^{TDCS}(p,q)$, then
\begin{equation}
g=p(p-1)+q(q-1)=p^2-p+\frac{N^2}{p^2}-\frac{N}{p}
\label{equ:gpq}
\end{equation}
The derivative of $g$ is $g^{'}=2p-1-\frac{2N^2}{p^3}+\frac{N}{p^2}$. Let $g^{'}=0$, then
\begin{equation}
2p^4-p^3+Np-2N^2=0.
\label{equ:g1}
\end{equation}
The only two real roots of Eq. \ref{equ:g1} are $r_1=\sqrt{N}$ and $r_2=-\sqrt{N}$. $r_2=-\sqrt{N}$ is abandoned as $p$ is a positive integer. Hence, $p=\sqrt{N}$. Taking two derivatives of $g$, we obtain
\begin{equation}
\begin{aligned}
g^{''}&=2+\frac{6N^2}{p^4}-\frac{2N}{p^3}=\frac{2}{p^4}(p^4-Np+3N^2)\\
&>\frac{2}{p^4}(4N^2-Np)>\frac{6N^2}{p^4}>0.
\end{aligned}
\label{g2}
\end{equation}
Thus, $p=\sqrt{N}$ is the minimum point of function $g$. That is, $min\{d_N^{TDCS}(p,q)\}=d_N^{TDCS}(\sqrt{N},\sqrt{N})$ Considering that $p$ is an integer number, and $p$ and $q$ are co-prime, $d_N^{TDCS}(p_i,q_i)\ge d_N^{TDCS}(p_m,q_m)$, where $p_m$ satisfies $p_m-\sqrt{N}=min\{p_i-\sqrt{N}\}$.
This completes the proof of Theorem \ref{appthe:TDCS}.
\end{proof}

\end{document}